\icmltitlerunning{Learning Control-Oriented Dynamical Structure from Data}
\begin{document}

\twocolumn[%
    \icmltitle{Learning Control-Oriented Dynamical Structure from Data}
    \icmlsetsymbol{equal}{*}
    \begin{icmlauthorlist}
        \icmlauthor{Spencer M. Richards}{asl}
        \icmlauthor{Jean-Jacques Slotine}{nsl}
        \icmlauthor{Navid Azizan}{lids}
        \icmlauthor{Marco Pavone}{asl}
    \end{icmlauthorlist}

    \icmlaffiliation{asl}{Autonomous Systems Laboratory (ASL), Stanford University, Stanford, CA 94305, USA}
    \icmlaffiliation{nsl}{Nonlinear Systems Laboratory, Massachusetts Institute of Technology, Cambridge, MA 02139, USA}
    \icmlaffiliation{lids}{Laboratory for Information \& Decision Systems (LIDS), Massachusetts Institute of Technology, Cambridge, MA 02139, USA}
    \icmlcorrespondingauthor{Spencer M. Richards}{spenrich@stanford.edu}
    \icmlkeywords{Control-oriented learning, Learning stabilizing controllers, Stability guarantees, System identification}
    \vskip 0.3in
]

\printAffiliationsAndNotice{}  

\begin{abstract}%
    Even for known nonlinear dynamical systems, feedback controller synthesis is a difficult problem that often requires leveraging the particular structure of the dynamics to induce a stable closed-loop system. For general nonlinear models, including those fit to data, there may not be enough known structure to reliably synthesize a stabilizing feedback controller. In this paper, we discuss a state-dependent nonlinear tracking controller formulation based on a state-dependent Riccati equation for general nonlinear control-affine systems. This formulation depends on a nonlinear factorization of the system of vector fields defining the control-affine dynamics, which always exists under mild smoothness assumptions. We propose a method for learning this factorization from a finite set of data. On a variety of simulated nonlinear dynamical systems, we empirically demonstrate the efficacy of learned versions of this controller in stable trajectory tracking. Alongside our learning method, we evaluate recent ideas in jointly learning a controller and stabilizability certificate for known dynamical systems; we show experimentally that such methods can be frail in comparison.%
    \footnote{We provide code to reproduce all of our results at:\\\url{https://github.com/StanfordASL/Learning-Control-Oriented-Structure}.}
\end{abstract}

\section{Introduction}\label{sec:intro}

Data-driven system identification and control algorithms are imperative to the operation of autonomous systems in complex environments. In particular, model-based algorithms equip an autonomous agent with the ability to learn how it and the system it is part of evolve over time. However, for general nonlinear systems including those learned from data, it is not always clear how to synthesize a \emph{stabilizing} tracking controller. Effective control design often leverages specific system structure; some classic examples of this are the linear quadratic regulator (LQR) for linear dynamics, and the computed torque method and its variants for Lagrangian dynamics \citep{SlotineLi1987,MurrayLiEtAl1994}. A central goal of \emph{control-oriented learning} \citep{RichardsAzizanEtAl2021,RichardsAzizanEtAl2023} and this paper is to jointly learn a dynamics model and additional \emph{control-oriented} structure that naturally encodes or reveals a stabilizing controller design.
\vspace{-1em}

\paragraph{Related Work} An approach favoured by recent works has been to learn stabilizing controllers for nonlinear system models by simultaneously learning a parametric controller and a parametric control-theoretic certificate, such as a control Lyapunov function (CLF) or control contraction metric (CCM). This paradigm originates in works that learn stability certificates for nonlinear systems of the form $\dot{x} = f(x)$ or $x_{t+1} = f(x_t)$. Convergence of the state to $x = 0$ is guaranteed if a Lyapunov certificate function~$V$ can be found such that $\tran{\grad{V}(x)}f(x) < 0$ or $V(f(x)) - V(x) < 0$, respectively, for each~$x \neq 0$. Such functional inequalities serve as the cornerstone for methods that learn parametric certificates from data either via gradient descent on a loss function comprising sampled point violations \citep{RichardsBerkenkampEtAl2018, BoffiTuEtAl2020}, or formal synthesis and verification \citep{AbateAhmedEtAl2021}. Similar functional inequalities appear in contraction theory \citep{LohmillerSlotine1998} to describe the convergence of system trajectories to each other over time, and have been used in imitation learning to regularize fitted dynamics models towards stability \citep{SindhwaniTuEtAl2018} or intrinsic stabilizability \citep{SinghRichardsEtAl2020}.

For \emph{controlled} nonlinear systems like ${\dot{x} = f(x) + B(x)u}$, one can try jointly learning a parametric CLF $V$ and parametric controller $u = k(x)$ by penalizing violations of the inequality $\tran{\grad{V}(x)}\rbr*{f(x) + B(x)k(x)} < 0$ at sampled states. This concept underlies most prior work on learning certified stabilizing nonlinear controllers \citep{ChangRoohiEtAl2019,ChangGao2011,DawsonQinEtAl2021,DawsonGaoEtAl2022}. For tracking a trajectory $(\bar{x}(t),\bar{u}(t))$, \citet{SunJhaEtAl2020} jointly learn a CCM and a feedback controller $u = \pi(x, \bar{x}, \bar{u})$, again based on sampled inequality violations. Such approaches aspire to the closed-loop stability promised by satisfaction of this infinite dimensional constraint, yet it is unclear whether penalizing violations at a finite number of points is sufficient to achieve this in practice.

Rather than trying to fit a controller and certificate to data, one can leverage structure in the dynamics to inform stabilizing controller design. Lagrangian dynamics of the form $H(q)\ddot{q} + C(q,\dot{q})\dot{q} + g(q) = u$ with state $x \defn (q,\dot{q})$ are amenable to feedback linearization \citep{SlotineLi1991} by virtue of their double-integrator form, even when learned from data \citep{GuptaMendaEtAl2020,RichardsAzizanEtAl2021,DjeumouNearyEtAl2022}. Hamiltonian dynamical structure as a physics-based prior in learned models can be exploited to synthesize passivity-based controllers \citep{ZhongDeyEtAl2020,LiDuongEtAl2022}. Perhaps the most fundamental example of structure informing control is LQR, which for linear dynamics $\dot{x} = Ax + Bu$ computes an optimal stabilizing controller from a Riccati equation using the system matrices~$(A,B)$ and chosen cost matrices~$(Q,R)$. Each of these designs is tailored to a subset of control-affine dynamical systems, yet LQR can be extended to general control-affine systems of the form $\dot{x} = f(x) + B(x)u$ with the \emph{state-dependent coefficient (SDC) factorization} ${f(x) = A(x)x}$ \citep{Cloutier1997}, which exists as long as $f$ is continuously differentiable and $f(0) = 0$ \citep{Cimen2010}. A feedback controller can then be implemented by solving the corresponding \emph{state-dependent Riccati equation (SDRE)} in terms of $(A(x), B(x))$ in closed-loop. While such a controller is only locally stabilizing in theory, in practice it has a large region of attraction and has proven effective in automotive \citep{Acarman2009}, spacecraft \citep{CloutierZipfel1999}, robotic \citep{WatanabeIwaseEtAl2008}, and process control \citep{BanksBeelerEtAl2002}.
\vspace{-1em}

\paragraph{Contributions} In this work, we study how to jointly identify nonlinear dynamics models and control-oriented structures from data that can be naturally leveraged in stabilizing closed-loop tracking control design. To this end, we study tracking controller for general nonlinear control-affine systems based on SDRE feedback. While SDREs have seen use in fixed-point stabilization, we focus on its extension to exactly characterizing and controlling the error dynamics for \emph{trajectory tracking}. This extension relies on a generalized SDC factorization of the error dynamics that always exists for continuously differentiable dynamics. We propose a method to learn such structure from a finite data set, and thereby enable the use of SDRE-based tracking control. We compare our method of learning control-enabling structure to an adaptation of prior work that tries to jointly learn a dynamics model, controller, and stability certificate. In a variety of simulated nonlinear systems, we demonstrate that our learned controller performs well in closed-loop, and that controllers instead learned alongside dynamics models and parametric certificate functions can be brittle and data inefficient in practice.

\section{Problem Statement}\label{sec:problem}
\vspace{-0.4em}
In this paper, we are interested in learning to control the nonlinear control-affine dynamical system
\begin{equation}\label{eq:caf}
    \dot{x} = f(x) + B(x)u = f(x) + \sum_{j=1}^m u_j b_j(x),
\end{equation}
with state $x(t) \in \R^n$, control $u(t) \in \R^m$, drift ${f : \R^n \to \R^n}$, and actuator $B : \R^n \to \R^{n \x m}$ with columns $b_j : \R^n \to \R^n,\ j \in \{1, 2, \dots, m\}$. In particular, we want to determine a tracking controller of the form $u = \pi(x, \bar{x}(t), \bar{u}(t))$ such that $(x(t), u(t))$ converges to any \emph{dynamically feasible} pair $(\bar{x}(t), \bar{u}(t))$, i.e., satisfying $\dot{\bar{x}} = f(\bar{x}) + B(\bar{x})\bar{u}$. While we know the dynamics take the form of \cref{eq:caf}, the vector fields $(f, \{b_j\}_{j=1}^m)$ are otherwise \emph{unknown} to us. Instead, we only have access to a finite pre-collected data set $\mathcal{D} \defn \{(x^{(i)}, u^{(i)}, \dot{x}^{(i)})\}_{i=1}^N$ of input-output measurements of \cref{eq:caf}.

\section{Nonlinear Tracking Control}\label{sec:method}
\vspace{-0.4em}
In this section, we overview a number of methods for synthesizing a tracking controller $u = \pi(x, \bar{x}(t), \bar{u}(t))$ for any control-affine nonlinear system of the form in \cref{eq:caf}. We begin with LQR-based methods, including state-dependent-LQR tracking control. We also discuss tracking controllers that are guaranteed to exponentially stabilize the resulting closed-loop dynamics provided an accompanying certificate function is found, namely a control contraction metric (CCM). For each controller, we highlight the \emph{control-oriented} structure that is required in addition to the dynamics to enable a stabilizing feedback signal. We will then discuss how to jointly learn such structure along with a dynamics model from data in \cref{sec:learning-structure} to enable closed-loop tracking control.

\subsection{Linearized LQR}\label{sec:lqr-linear}
Perhaps the simplest approach to tracking control is based on linearizing the dynamics in \cref{eq:caf} around the current target $(\bar{x}(t),\bar{u}(t))$. Specifically, in this method we first linearize the nonlinear dynamics of the tracking error $e(t) \defn x(t) - \bar{x}(t)$ given by
\begin{equation}\label{eq:error}
    \dot{e} = f(x) + B(x)u - f(\bar{x}) - B(\bar{x})\bar{u}
\end{equation}
to arrive at the approximation
\begin{equation}\label{eq:error-linearized}
    \dot{e} \approx \underbrace{\biggl(\pd{f}{x}(\bar{x}) + \sum_{j=1}^m \bar{u}_j \pd{b_j}{x}(\bar{x})\biggr)}_{\eqqcolon A(\bar{x},\bar{u})}e + B(\bar{x})(u - \bar{u}).
\end{equation}
Then, with $(A(\bar{x},\bar{u}), B(\bar{x}))$ and chosen positive-definite weight matrices $(Q, R)$, we solve the Riccati equation
\begin{equation}\label{eq:riccati}
\begin{aligned}
    &P(\bar{x},\bar{u})A(\bar{x},\bar{u}) + \tran{A(\bar{x},\bar{u})}P(\bar{x},\bar{u}) \\
    &\qquad- P(\bar{x},\bar{u})B(\bar{x})\inv{R}\tran{B(\bar{x})}P(\bar{x},\bar{u}) = -Q
\end{aligned}
\end{equation}
for the positive-definite solution $P(\bar{x},\bar{u})$. We then compute the tracking controller
\begin{equation}\label{eq:lqr-linear}
    u = \pi_{\mathrm{LQR}}(x, \bar{x}, \bar{u}) \defn \bar{u} - \inv{R}\tran{B(\bar{x})}P(\bar{x},\bar{u})e.
\end{equation}
In practice, the ``linearized LQR'' tracking controller (i.e., LQR control applied to the linearized dynamics) in \cref{eq:lqr-linear} can be effective as long as $(x(t),u(t))$ remains close to $(\bar{x}(t),\bar{u}(t))$, i.e., as long as the linearized error dynamics in \cref{eq:error-linearized} remain a good approximation of original error dynamics in \cref{eq:error}. Overall, the linearized LQR tracking controller requires us to be able to evaluate and differentiate the vector fields $(f, \{b_j\}_{j=1}^m)$; no additional structures are required.

\subsection{Nonlinear State-Dependent LQR}
For general nonlinear systems, the linearized LQR tracking controller presented in the previous section is a good first choice. However, it can fail for nonlinear systems when $(x(t),u(t))$ strays from the target $(\bar{x}(t),\bar{u}(t))$, since then \cref{eq:error-linearized} is no longer a good approximation.

In this section, we introduce an \emph{exact} nonlinear factorization of the error dynamics for general control-affine systems that resemble the linearized form in \cref{eq:error-linearized}. This factorization is based on the theory of SDC forms \cite{Cloutier1997,Cimen2010,Cimen2012}, and thereby enables a feedback law based on solving an associated SDRE.

\paragraph{State-Dependent LQR for Regulation} To begin, we first look at the simpler problem of regulating the state $x(t)$ of the system $\dot{x} = f(x) + B(x)u$ to $x = 0$. For now, we assume that $(x,u) = (0,0)$ is an equilibrium pair, i.e., $f(0) = 0$. If $f : \R^n \to \R^n$ is continuously differentiable, \citet[Proposition 1]{Cimen2010} shows
\begin{equation}\label{eq:sdc}
    \dot{x} = f(x) + B(x)u = A(x)x + B(x)u,
\end{equation}
where $f(x) \equiv A(x)x$ is an \emph{exact} factorization known as a \emph{state-dependent coefficient (SDC) form} of~$f$. With chosen positive-definite matrices $(Q, R)$, these factorized dynamics naturally enable the controller $u = K(x)x = -\inv{R}\tran{B(x)}P(x)x$, where $P(x)$ is the positive-definite solution of the \emph{state-dependent Riccati equation (SDRE)}
\begin{equation}\label{eq:sdre}
\begin{aligned}
    &P(x)A(x) + \tran{A(x)}P(x)& \\
    &\qquad - P(x)B(x)\inv{R}\tran{B(x)}P(x) = -Q
\end{aligned}.
\end{equation}
As its name implies, the SDRE is dependent on the \emph{current} state $x$ of the system. This contrasts with the Riccati equation for linearized LQR in \cref{eq:riccati}, which does not depend on $x$ and only depends on the target pair $(\bar{x},\bar{u})$ due to linearization. Despite using an exact nonlinear factorization of the dynamics, the feedback law $u = -\inv{R}\tran{B(x)}P(x)x$ is only locally stabilizing in theory and there is no guarantee that it will outperform linearized LQR, especially if only symmetric solutions of \cref{eq:sdre} are considered \citep[Example 2]{Cimen2012}. Nevertheless, \emph{state-dependent LQR (SD-LQR)} control in practice can induce a large region of attraction, especially relative to linearized control \citep{Cimen2012}.

\paragraph{Generalized SDC Forms} To extend SD-LQR to tracking control for control-affine systems, we leverage a generalization of SDC forms previously introduced by \citet{TsukamotoChungEtAl2021,TsukamotoChungEtAl2021b} and described in \cref{thm:sdc} below.
\begin{proposition}
    \label{thm:sdc}
    Suppose $f : \R^n \to \R^d$ is continuously differentiable. Then a matrix function $A : \R^n \x \R^n \to \R^{d \x n}$ exists such that
    \begin{equation}\label{eq:sdc-generalized}
        f(x) - f(\bar{x}) = A(\bar{x}, x - \bar{x})(x - \bar{x}) = A(\bar{x}, e)e,
    \end{equation}
    for all $x,\bar{x} \in \R^n$ with $e \defn x - \bar{x}$. Furthermore, $A$ can be chosen such that $A(\bar{x}, 0) \equiv \pd{f}{x}(\bar{x})$.
\end{proposition}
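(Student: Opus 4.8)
The plan is to prove this by the fundamental theorem of calculus along the straight-line segment connecting $\bar{x}$ to $x$. This is the standard trick for producing state-dependent coefficient factorizations, and it extends naturally from the regulation case (where the segment goes from $0$ to $x$) to the tracking case (where it goes from $\bar{x}$ to $x = \bar{x} + e$).

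First, I would parametrize the segment by $\gamma(s) \defn \bar{x} + se$ for $s \in [0,1]$, so that $\gamma(0) = \bar{x}$ and $\gamma(1) = x$. Since $f$ is continuously differentiable, the composition $s \mapsto f(\gamma(s))$ is $C^1$, and the fundamental theorem of calculus gives
\begin{equation*}
    f(x) - f(\bar{x}) = \int_0^1 \pd{f}{x}(\bar{x} + se)\, e \, ds = \left(\int_0^1 \pd{f}{x}(\bar{x} + se)\, ds\right) e.
\end{equation*}
The key observation is that the parenthesized integral depends only on $\bar{x}$ and $e$, so I define
\begin{equation*}
    A(\bar{x}, e) \defn \int_0^1 \pd{f}{x}(\bar{x} + se)\, ds,
\end{equation*}
which is a well-defined matrix function $A : \R^n \x \R^n \to \R^{d \x n}$ by continuity of the integrand. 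This immediately yields the claimed identity $f(x) - f(\bar{x}) = A(\bar{x}, e)e$ for all $x, \bar{x}$.

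For the final claim, I would evaluate $A(\bar{x}, 0)$ directly from the integral formula: setting $e = 0$ collapses the integrand to the constant $\pd{f}{x}(\bar{x})$, so $A(\bar{x}, 0) = \int_0^1 \pd{f}{x}(\bar{x})\, ds = \pd{f}{x}(\bar{x})$, exactly the required consistency with the Jacobian linearization. This recovers the linearized LQR coefficient in \cref{eq:error-linearized} as the $e \to 0$ limit, which is the point of the construction.

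I do not anticipate a serious obstacle here, since the construction is explicit and the regularity is handled by continuous differentiability of $f$. The only points requiring care are purely technical: verifying that differentiation of $f(\gamma(s))$ under the chain rule is valid (immediate from $f \in C^1$) and that the matrix-valued integral can be pulled out to the left of the vector $e$ (valid because $e$ does not depend on the integration variable $s$). If anything, the subtlety worth flagging is non-uniqueness: the factorization $A$ is far from unique, and the integral formula is merely one canonical choice, which happens to be the one satisfying the stated Jacobian condition at $e = 0$. No appeal to the earlier \citet[Proposition 1]{Cimen2010} is even needed, as this generalized statement subsumes it by taking $\bar{x} = 0$.
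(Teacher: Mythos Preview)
Your argument is correct and is essentially the paper's proof specialized to the straight-line path: the paper writes $r(s)=\bar{x}+R(s)e$ for any differentiable $R$ with $R(0)=0$, $R(1)=I$, and obtains $A(\bar{x},e)=\int_0^1 \pd{f}{x}(\bar{x}+R(s)e)R'(s)\,ds$, which reduces to your formula when $R(s)=sI$. The only thing the paper's added generality buys is an explicit parametrization of the non-uniqueness you already flagged; the proof mechanism is identical.
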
\begin{proof}
    Consider any curve $r(s) = \bar{x} + R(s)e$ where $R : [0,1] \to \R^{n \x n}$ is differentiable, $R(0) = 0$, and $R(1) = I$. Then by the fundamental theorem for line integrals,
    \begin{equation}\label{eq:line-integral}
        f(x) - f(\bar{x})
        \equiv  \underbrace{\rbr*{
            \int_0^1 \pd{f}{x}(\bar{x} + R(s)e)R'(s)\,ds
        }}_{\eqqcolon A(\bar{x}, e)}e.
    \end{equation}
    Moreover, $A(\bar{x}, 0) = \int_0^1 \pd{f}{x}(\bar{x})R'(s)\,ds = \pd{f}{x}(\bar{x})$.
\end{proof}
\Cref{thm:sdc} describes a factorization of continuously differentiable $f$ that exactly quantifies $f(x) - f(\bar{x})$ between any $x$ and $\bar{x}$. When $x = \bar{x}$, the matrix factor $A(\bar{x}, e)$ reduces to the local Jacobian of $f$ at $\bar{x}$. Much like the linear approximation $\pd{f}{x}(\bar{x})e$, the exact factorization $A(\bar{x}, e)e$ is a function of the chosen ``target'' $\bar{x}$ and the ``error'' $e \defn x - \bar{x}$. It is precisely this perspective that now allows us to apply this generalized SDC form to tracking control.

\paragraph{State-Dependent LQR for Trajectory Tracking} For SD-LQR tracking control, we consider the error dynamics for general control-affine systems. Let $(\bar{x}(t), \bar{u}(t))$ be a dynamically feasible pair that we want to track. Then the dynamics of the tracking error $e \defn x - \bar{x}$ are
\begin{equation}\label{eq:error-nonlinear}
\hspace{-0.6em}\begin{aligned}
    \dot{e}
    &= f(x) + B(x)u - f(\bar{x}) - B(\bar{x})\bar{u} \\
    &= f(x) - f(\bar{x}) + (B(x){-}B(\bar{x}))\bar{u} + B(x)(u{-}\bar{u}) \\
    &= \underbrace{\biggl( A_0(\bar{x},e) + \sum_{j=1}^m\bar{u}_jA_j(\bar{x},e) \biggr)}_{
        \eqqcolon A_\text{SDC}(\bar{x},\bar{u},e)
    }e + B(x)v
\end{aligned}~,\end{equation}
where $v \defn u - \bar{u}$, and $(A_0, \{A_j\}_{j=1}^m)$ are SDC factorizations of the vector fields $(f, \{b_j\}_{j=1}^m)$ such that
\begin{equation}\label{eq:sdc-factorizations}
\begin{aligned}
    f(x) - f(\bar{x}) &\equiv A_0(\bar{x}, e)e \\
    b_j(x) - b_j(\bar{x}) &\equiv A_j(\bar{x}, e)e,\ \forall j \in \{1, 2, \dots, m\}
\end{aligned}~.\end{equation}
An SDRE similar to \cref{eq:sdre} expressed in terms of $(A_\text{SDC}(\bar{x},\bar{u},e), B(x))$ and chosen positive-definite weight matrices $(Q,R)$ can be solved for the positive-definite matrix $P_\text{SDC}(\bar{x},\bar{u},e)$. The associated nonlinear tracking controller is then
\begin{equation}\label{eq:lqr-sdc}
    u = \pi_\text{SDC}(x, \bar{x}, \bar{u}) \defn \bar{u} - \inv{R}\tran{B(x)}P_\text{SDC}(\bar{x},\bar{u},e)e.
\end{equation}
This controller reduces to the linearized LQR controller in \cref{eq:lqr-linear} if $A_\text{SDC}(\bar{x},\bar{u},0)$ is used, since then the SDC factorizations and hence the exact nonlinear error dynamics in \cref{eq:error-nonlinear} reduce to the Jacobians and the linearized error dynamics, respectively, in \cref{eq:error-linearized}.

Our goal in using SD-LQR tracking control is to enable better tracking performance for highly nonlinear systems that may experience large deviations from the target trajectory, e.g., during fast or aggressive maneuvers. The key trade-off in the use of a more complex controller is the need for additional known control-oriented structure. In this case, this structure comprises the SDC factorizations $(A_0, \{A_j\}_{j=1}^m)$ that are not required in the simpler linearized LQR tracking controller. In \cref{sec:learning-structure}, we will discuss how we can learn $(A_0, \{A_j\}_{j=1}^m)$ from data, and later in \cref{sec:experiments} we will show how this has a powerful regularization effect on learning models of dynamical systems for the purposes of closed-loop control. Before that, in the next section we overview alternative methods that couple a tracking controller with a certificate function guaranteeing closed-loop tracking convergence.

\subsection{Exponential Stabilizability via Contraction Theory}
Linearized and state-dependent LQR rely on approximate and exact factorized forms, respectively, of the system dynamics to construct tracking control laws. However, neither of these LQR controllers is guaranteed to stabilize the closed-loop error dynamics when the system is nonlinear. In this section, we review a family of tracking controllers that ensure exponential stability, i.e.,
\begin{equation}
    \norm{x(t) - \bar{x}(t)}_2 \leq \alpha\norm{x(0) - \bar{x}(0)}_2\exp(-\beta t),
\end{equation}
with overshoot $\alpha > 0$ and decay rate $\beta > 0$, for all $t \geq 0$.

To this end, \emph{contraction theory} \citep{LohmillerSlotine1998} seeks to construct certifiably stabilizing controllers for any control-affine system of the form \cref{eq:caf} by analyzing the stabilizability of the variational dynamics
\begin{equation}
    \dot{\delta}_x = \underbrace{\biggl(
        \pd{f}{x}(x) + \sum_{j=1}^m u_j \pd{b_j}{x}(x)
    \biggr)}_{\eqqcolon A(x,u)}\delta_x + B(x)\delta_u,
\end{equation}
where $\delta_x$ and $\delta_u$ are virtual displacements in the tangent spaces at $x$ and $u$, respectively. The high-level idea of contraction theory is to stabilize this infinite family of linear variational systems pointwise everywhere with a variational feedback law for~$\delta_u$, then path-integrate to get a stabilizing feedback law for~$u$ in the original system \citep{LohmillerSlotine1998,ManchesterSlotine2017}. Let $M : \R^n \to \mathbb{S}^n_{\succ 0}$ be a uniformly positive-definite matrix-valued function, i.e., such that $\underbar{\lambda}I \preceq M(x) \preceq \overbar{\lambda}I$ for some constants $\underbar{\lambda}, \overbar{\lambda} > 0$ and all $x \in \R^n$. Denote the time-derivative of $M(x)$ as $\dot{M}(x,u)$, with $ij$-th element
\begin{equation}
    \dot{M}_{ij}(x,u) \defn \tran{\grad{M_{ij}}(x)}(f(x) + B(x)u).
\end{equation}
Then $M(x)$ is a \emph{control contraction metric (CCM)} for the system in \cref{eq:caf} if there exist a constant $\beta > 0$ and a variational controller $\delta_u = \delta_\pi(\delta_x, x, u)$ such that
\begin{equation}\label{eq:ccm-variational}
\begin{aligned}
    &\tran{\delta_x}\rbr*{ \dot{M}(x,u){+}M(x)A(x,u){+}\tran{A(x,u)}M(x) }\delta_x \\
    &\quad + 2\tran{\delta_x}M(x)B(x)\delta_\pi(\delta_x, x, u) \leq -2\beta\tran{\delta_x}M(x)\delta_x
\end{aligned}\end{equation}
for all $\delta_x$, $x$, and $u$. Given a CCM, an exponentially stabilizing tracking controller of the form
\begin{equation}\label{eq:ccm-controller}
    u = \pi_\mathrm{CCM}(x,\bar{x},\bar{u}) = \bar{u} + k(x,\bar{x})
\end{equation}
can be constructed by geodesic integration between $x$ and~$\bar{x}$ \citep{ManchesterSlotine2017,SinghLandryEtAl2019,SinghRichardsEtAl2020}, with overshoot $\alpha = \sqrt{\overbar{\lambda}/\underbar{\lambda}}$, decay rate $\beta$, and ${k(\bar{x},\bar{x}) \equiv 0}$. Alternatively, a differentiable controller of the form in \cref{eq:ccm-controller} achieves this same result if
\begin{equation}\label{eq:closed-loop-contraction}
\hspace{-0.8em}
\begin{aligned}
    &\dot{M}(x,u) + \tran{ \rbr*{ A(x,u) + B(x)\pd{k}{x}(x,\bar{x}) } }M(x) \\
    &+ M(x)\rbr*{ A(x,u) + B(x)\pd{k}{x}(x,\bar{x}) } \preceq -2\beta M(x)
\end{aligned},
\end{equation}
for all $x$, $\bar{x}$, and $\bar{u}$ \citep{ManchesterSlotine2017}.

The exponential stability of the error dynamics in closed-loop with the tracking controller in \cref{eq:ccm-controller} is \emph{certified} by the CCM $M$. Once again we see that attaining better closed-loop performance requires additional control-oriented structure; in this case, this structure comprises the certificate $M$ and the closed-loop contraction condition in \cref{eq:closed-loop-contraction} that must be satisfied for \emph{all} $x$, $\bar{x}$, and $\bar{u}$.

\section{Jointly Learning Dynamics, Controllers, and Control-Oriented Structure}\label{sec:learning-structure}

In the previous section, we introduced a number of tracking controllers for nonlinear control-affine systems. We also highlighted how increasing the complexity of the tracking controller often promises improved closed-loop performance at the cost of requiring knowledge of additional control-oriented components. For linearized LQR, only the vector fields $(f, \{b_j\}_{j=1}^m)$ and their derivatives are needed. For SD-LQR, we also need to know the SDC factorizations $(A_0, \{A_j\}_{j=1}^m)$ of $(f, \{b_j\}_{j=1}^m)$. For CCM-based tracking control, we need to know $(f, B)$ and a CCM $M$ that together satisfy the constraint in \cref{eq:closed-loop-contraction} for all $x$, $\bar{x}$, and $\bar{u}$. Even when $(f, B)$ are known, synthesizing SDC factorizations (e.g., via the line integral in \cref{eq:line-integral}) or a CCM is a difficult problem that requires leveraging further structure in the dynamics (e.g., sparsity). This is generally not possible when $(f, B)$ are learned from data for an unknown system using complex parametric function approximators (e.g., neural networks).

In this section, we describe our main contribution to learning how to control control-affine dynamical systems when we only have access to a finite labelled data set $\mathcal{D} \defn \{(x^{(i)}, u^{(i)}, \dot{x}^{(i)})\}_{i=1}^N$ of input-output measurements of \cref{eq:caf}. Specifically, we describe a few methods for jointly learning a dynamics model and a tracking controller with unconstrained optimization, and focus on how this involves additionally modeling and learning control-oriented structure to enable a particular feedback law.

\paragraph{Learning Dynamics from Data} Each method in this section learns a model of the dynamics in \cref{eq:caf}. To this end, we define the regression loss
\begin{equation}\label{eq:regression}
    L^\text{dyn}_\text{reg}(f, B, \mathcal{D}) = \sum_{(x, u, \dot{x}) \in \mathcal{D}} \norm{\dot{x} - f(x) - B(x)u}_2^2.
\end{equation}
If we instantiate $(f,B)$ with parametric functions, such as neural networks, we can do gradient descent on this loss to fit $(f,B)$ to the data. Thus, a na\"{i}ve approach and our first baseline for learning how to control \cref{eq:caf} is to fit a differentiable model of $(f,B)$ to the data $\mathcal{D}$ and then apply linearized tracking LQR from \cref{sec:lqr-linear}.

\paragraph{Learning SDC Factorizations (Our Method)} For SD-LQR, we need to learn the SDC factorizations denoted by ${\mathcal{A} \defn (A_0, \{A_j\}_{j=1}^m)}$. For this, we use the regression loss
\begin{equation}\hspace{-0.9em}
    L^\text{SDC}_\text{reg}(\mathcal{A},\mathcal{D})
    = \hspace{-1em}\sum_{\substack{
        (x, u, \dot{x}),\\(\bar{x}, \bar{u}, \dot{\bar{x}}) \in \mathcal{D}
    }}\hspace{-0.8em}
    \norm{\dot{e} - A_\text{SDC}(\bar{x},\bar{u},e)e - B(x)v}_2^2,
\end{equation}
which sums over \emph{pairs} of labelled samples in the data set~$\mathcal{D}$. We also need $\mathcal{A}$ to be a set of valid SDC factorizations, for which we define the \emph{unlabelled} data set $\mathcal{D}^\text{SDC}_\text{aux} = \{(x^{(i)}, \bar{x}^{(i)})\}_{i=1}^{N^\text{SDC}_\text{aux}}$ and the auxiliary loss
\begin{equation}\hspace{-0.5em}
\begin{aligned}
    &L^\text{SDC}_\text{aux}(f,B,\mathcal{A},\mathcal{D}^\text{SDC}_\text{aux}) \\
    &= \hspace{-0.4em} \sum_{(x,\bar{x}) \in \mathcal{D}^\text{SDC}_\text{aux} }\hspace{-0.4em}
    \biggl(\begin{aligned}[t]
        &\norm{f(x) - f(\bar{x}) - A_0(\bar{x},e)e}_2^2 \\
        &+ \sum_{j=1}^m \norm{b_j(x) - b_j(\bar{x}) - A_j(\bar{x},e)e}_2^2
    \biggr)\end{aligned}
\end{aligned}.
\end{equation}
Overall, we can learn $(f,B,\mathcal{A})$ instantiated as parametric functions via gradient descent on the composite loss
\begin{equation}\label{eq:loss-sdc}
\begin{aligned}
    &L^\text{SDC}(f,B,\mathcal{A},\mathcal{D},\mathcal{D}^\text{SDC}_\text{aux}) \\
    &= L^\text{dyn}_\text{reg}(f, B, \mathcal{D}) + L^\text{SDC}_\text{reg}(\mathcal{A},\mathcal{D}) + L^\text{SDC}_\text{aux}(f,B,\mathcal{A},\mathcal{D}^\text{SDC}_\text{aux})
\end{aligned}.
\end{equation}
This total loss is \emph{semi-supervised} in that it is a function of both labelled and unlabelled data $\mathcal{D}$ and $\mathcal{D}^\text{SDC}_\text{aux}$, respectively. Ideally, we would want to constrain $\mathcal{A}$ to be a set of SDC factorizations of $(f,B)$ consistent with \cref{eq:sdc-generalized}. Since we cannot straightforwardly enforce \cref{eq:sdc-generalized} by construction, we use the auxiliary loss term in \cref{eq:loss-sdc} as a penalty-based relaxation, with as many unlabelled samples in $\mathcal{D}^\text{SDC}_\text{aux}$ as possible. This idea of relaxing pointwise functional constraints with sampling-based penalty terms is a common approach to learning global control-oriented structure \citep{RichardsBerkenkampEtAl2018,SinghRichardsEtAl2020,SunJhaEtAl2020,DawsonGaoEtAl2022} and more generally in semi-infinite optimization \citep{ZhangWuEtAl2010}.

\paragraph{Learning CCMs} This method is founded on the literature concerning joint learning of dynamics, controllers, and stability certificates \citep{SinghRichardsEtAl2020,SunJhaEtAl2020,DawsonGaoEtAl2022,ZhouQuartzEtAl2022}. For CCM-based tracking control, we need to learn a dynamics model $(f,B)$, a uniformly positive-definite CCM $M$, and a feedback controller ${u = \bar{u} + k(x,\bar{x})}$ such that $k(\bar{x},\bar{x}) \equiv 0$, that altogether satisfy the inequality in \cref{eq:closed-loop-contraction} for all $x$, $\bar{x}$, and $\bar{u}$. We take cues from \citet{SunJhaEtAl2020} to setup a loss function that will allow us to train all three components together with gradient descent, albeit with some adjustments to accommodate our lack of any knowledge of the dynamics $(f,B)$ (which \citet{SunJhaEtAl2020} assume are known).

We first specify the desired  overshoot $\alpha > 0$, decay rate $\beta > 0$, and eigenvalue lower bound $\underbar{\lambda} > 0$ as hyperparameters, and construct a candidate CCM~$M$ as
\begin{equation}
    M(x) = \underbar{\lambda}I + L(x)\tran{L(x)},
\end{equation}
where $L : \R^n \to \R^{n \x n}$ is any parametric matrix function. This construction ensures $M(x) \succeq \underbar{\lambda}I$ for all~$x$. To ensure $k(\bar{x},\bar{x}) \equiv 0$, we follow \cref{thm:sdc} and let $k(x,\bar{x}) = K(x,\bar{x})(x - \bar{x})$ for any parametric function ${K : \R^n \to \R^n \to \R^{m \x n}}$. With the closed-loop variational matrix defined by
\begin{equation}
    \bar{A}(x,\bar{x},\bar{u}) \defn \pd{f}{x}(x) + \sum_{j=1}^m u_j \pd{b_j}{x}(x) + B(x)\pd{k}{x}(x,\bar{x}),
\end{equation}
we collect terms of the inequality from \cref{eq:closed-loop-contraction} in
\begin{equation}\begin{aligned}
    C(x,\bar{x},\bar{u}) =\   &\dot{M}(x,u) + \tran{\bar{A}(x,\bar{x},\bar{u})}M(x) \\
                                    &+ M(x)\bar{A}(x,\bar{x},\bar{u}) + 2\beta M(x),
\end{aligned}\end{equation}
with $u = \bar{u} + k(x,\bar{x}) = \bar{u} + K(x,\bar{x})(x-\bar{x})$. Finally, with the unlabelled data set $\mathcal{D}^\text{CCM}_\text{aux} = \{(x^{(i)}, \bar{x}^{(i)}), \bar{u}^{(i)})\}_{i=1}^{N^\text{CCM}_\text{aux}}$, we define the auxiliary loss
\begin{equation}
\begin{aligned}
    &L^\text{CCM}_\text{aux}(f,B,M,K,\mathcal{D}^\text{CCM}_\text{aux}) \\
    &= \hspace{-1em}\sum_{(x,\bar{x},\bar{u}) \in \mathcal{D}^\text{CCM}_\text{aux} }\hspace{-0.4em} \begin{aligned}[t]\biggl(
    &\mathrm{max}\rbr*{ 0, \lambda_\mathrm{max}(C(x,\bar{x},\bar{u})) } \\
    &+ \mathrm{max}\rbr*{ 0, \lambda_\mathrm{max}(M(x)) - \alpha^2\underbar{\lambda} } \biggr)\end{aligned}
\end{aligned},
\end{equation}
where $\lambda_\mathrm{max}(\cdot)$ denotes the maximum eigenvalue operator. Overall, we can learn $(f,B,M,K)$ instantiated as parametric functions via gradient descent on the total loss
\begin{equation}\label{eq:loss-ccm}
\begin{aligned}
    &L^\text{CCM}(f,B,M,K,\mathcal{D},\mathcal{D}^\text{CCM}_\text{aux}) \\
    &= L^\text{dyn}_\text{reg}(f, B, \mathcal{D}) + L^\text{CCM}_\text{aux}(f,B,M,K,\mathcal{D}^\text{CCM}_\text{aux})
\end{aligned}.
\end{equation}
Much like in the SD-LQR case, this total loss is semi-supervised, although the auxiliary data set $\mathcal{D}^\text{CCM}_\text{aux}$ also requires samples of the input $\bar{u}$. This loss function can be viewed as an unconstrained relaxation of the approach from \citet{SinghRichardsEtAl2020}, who instead use pointwise inequalities derived from \cref{eq:ccm-variational} as exact constraints in an optimization over $(f,B,M)$. However, \citet{SinghRichardsEtAl2020} only use linear-in-parameter approximators for $(f,B,M)$ to construct a bi-convex program between $(f,B)$ and $M$, investigate the regularizing effect of fitting $(f,B,M)$ on the predictive capabilities of $(f,B)$ in closed-loop, and do not learn a controller. In contrast, the modified setup described above jointly learns a dynamics model, certificate function, and controller that can each be expressed with complex parametric functions, so that in the next section we can compare with the learning setups for linearized LQR and SD-LQR.

\section{Experiments}\label{sec:experiments}
\begin{figure*}[t]
\centering
\includegraphics[width=\textwidth]{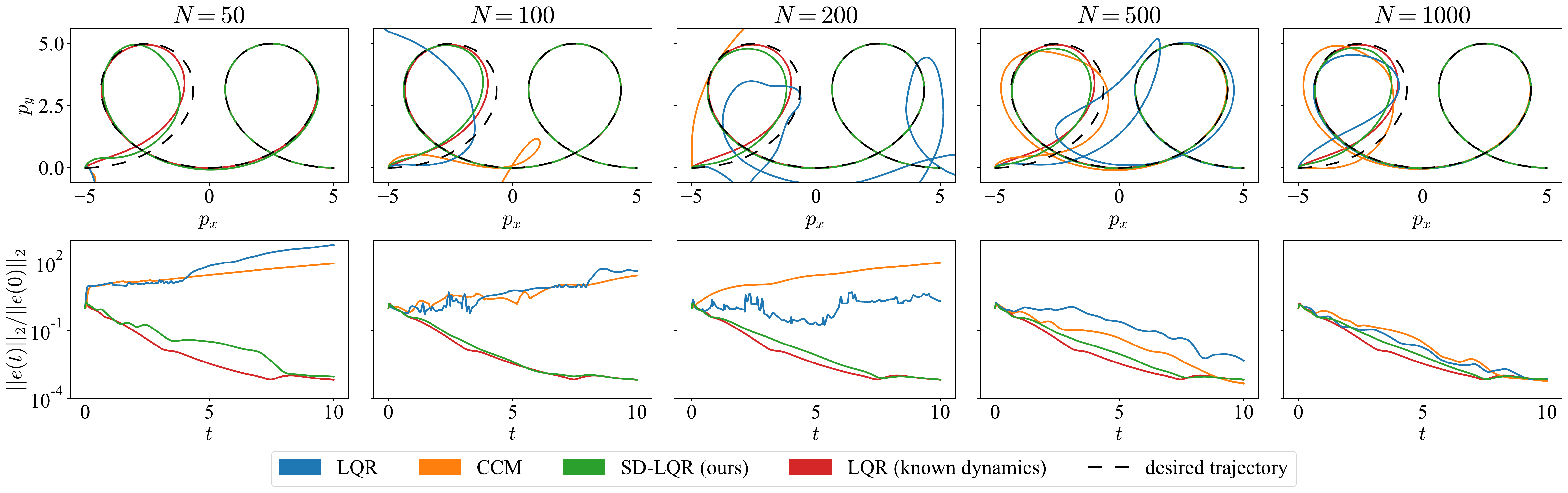}\vspace{-1em}
\caption{%
    Trajectory tracking results for the PVTOL system on a double loop-the-loop trajectory. The top row qualitative depicts the closed-loop trajectories for each method overlayed with the desired trajectory (black dashed). The bottom row shows the normalized tracking error over time. Plots proceed from left to right with an increasing amount $N$ of labelled training data. Our learned SD-LQR method is the only learning-based approach that successfully tracks the trajectory for all $N$.
}
\label{fig:double-loop}
\end{figure*}

In this section, we experimentally investigate the three methods described in \cref{sec:learning-structure} for jointly learning a dynamics model, stabilizing tracking controller, and/or some control-oriented structure enabling the controller, namely:
\begin{itemize}[nosep]
    \item\emph{Na\"ive LQR learning:} Fit a control-affine form $(f,B)$ to labelled data $\mathcal{D} \defn \{(x^{(i)}, u^{(i)}, \dot{x}^{(i)})\}_{i=1}^N$ via gradient descent on the regression loss in \cref{eq:regression}. Then perform linearized LQR.

    \item\emph{CCM learning:} Jointly fit $(f,B)$, a CCM $M$, and a gain matrix function $K$ to labelled data $\mathcal{D}$ and unlabelled data $\mathcal{D}^\text{CCM}_\text{aux} = \{(x^{(i)}, \bar{x}^{(i)}), \bar{u}^{(i)})\}_{i=1}^{N^\text{CCM}_\text{aux}}$ via gradient descent on the composite loss in \cref{eq:loss-ccm}. Then apply the controller $u = \bar{u} + K(x,\bar{x})(x - \bar{x})$.

    \item\emph{SD-LQR learning (our method):} Jointly fit $(f,B)$ and SDC factorizations $\mathcal{A} \defn (A_0, \{A_j\}_{j=1}^m)$ to labelled data~$\mathcal{D}$ and unlabelled data $\mathcal{D}^\text{SDC}_\text{aux} = \{(x^{(i)}, \bar{x}^{(i)})\}_{i=1}^{N^\text{SDC}_\text{aux}}$ via gradient descent on the composite loss in \cref{eq:loss-sdc}. Then perform SD-LQR.
\end{itemize}
We also implement linearized LQR with \emph{known} dynamics as an oracle. We evaluate these methods on two nonlinear benchmark systems:

\paragraph{Spacecraft} Our planar spacecraft, based on that of \citet{LewEtAl2022}, has mass $m$ with center-of-mass offset at $(d_x, d_y) \in \R^2$, and a rotational moment of inertia $J$. Its state is $x = (p_x, p_y, \theta, \dot{p}_x, \dot{p}_y, \dot{\theta}) \in \R^6$, where $(p_x,p_y)$ is its position and $\theta$ is its heading angle. The control is $u = (F_x, F_y, M) \in \R^3$, where $(F_x, F_y)$ are the applied forces along the inertial $x$-axis and $y$-axis, respectively, and $M$ is the applied moment. The control-affine dynamics of the spacecraft are given by
\begin{equation*}
    f(x) = \frac{1}{m}\bmx{\dot{p}_x \\ \dot{p}_y \\ \dot{\theta} \\ \dot{\theta}^2d_x \\ \dot{\theta}^2d_y \\ 0},\
    B(x) = \frac{1}{mJ}\bmx{
        0 & 0 & 0 \\
        0 & 0 & 0 \\
        0 & 0 & 0 \\
        J + d_y^2   & -d_xd_y   & d_y  \\
        -d_xd_y     & J + d_x^2 & -d_x \\
        md_y        & -md_x     & m
    }.
\end{equation*}

\paragraph{PVTOL} Our planar vertical-take-off-and-landing (PVTOL) vehicle has mass~$m$, rotational moment of inertia~$J$, moment arm length~$\ell$ between the center of mass and each of two rotors, and gravitational acceleration $g$. Its state is ${x = (p_x, p_y, \phi, v_x, v_y, \dot{\phi}) \in \R^6}$, where $(p_x,p_y)$ is its position, $\phi$ is its roll angle, and $(v_x,v_y)$ is its velocity in the body-fixed frame. The control is $u = (F_R, F_L) \in \R^2$, where $F_R$ and $F_L$ are the applied thrusts by the right and left rotors, respectively, along the body-fixed $y$-axis. The control-affine dynamics of this PVTOL are given by
\begin{equation*}
    f(x) = \bmx{
        v_x\cos\phi - v_y\sin\phi \\
        v_x\sin\phi + v_y\cos\phi \\
        \dot\phi \\
        v_y\dot\phi - g\sin\phi \\
        -v_x\dot\phi - g\cos\phi \\
        0
    },\
    B(x) = \bmx{
        0 & 0 \\
        0 & 0 \\
        0 & 0 \\
        0 & 0 \\
        \sfrac{1}{m} & \sfrac{1}{m} \\
        \sfrac{\ell}{J} & -\sfrac{\ell}{J}
    }.
\end{equation*}
The planar spacecraft is only slightly nonlinear due to the term $\dot{\theta}^2$ introduced by the center-of-mass offset, and so should serve as a relatively easy benchmark for learning-based control. In contrast, the PVTOL is a highly nonlinear, underactuated, non-minimum-phase dynamical system \citep{HauserSastryEtAl1992}, and thus serves as a challenging benchmark.

\paragraph{Training Details} For each system, we begin by uniformly sampling points $\{(x^{(i)}, u^{(i)})\}_{i=1}^N$ from a bounded state-control set $\mathcal{X} \x \mathcal{U} \subset \R^n \x \R^m$, and evaluating the true dynamics to form the labelled data $\mathcal{D}$. Both $\mathcal{X}$ and $\mathcal{U}$ are described in \cref{appendix}, along with other implementation details and hyperparameters. We additionally uniformly sample unlabelled data sets $\mathcal{D}^\text{CCM}_\text{aux}$ and $\mathcal{D}^\text{SDC}_\text{aux}$ for use with the CCM and SDC learning methods, respectively, from $\mathcal{X}$ and $\mathcal{U}$. We vary the labelled training set size $N$ to investigate the data efficiency of each method, with a constant number of auxiliary points $N^\text{CCM}_\text{aux} = N^\text{SDC}_\text{aux} = 10000$. Each function in $(f, B, M, K, A_0, \{A_j\}_{j=1}^m)$ is approximated as a feedforward neural network with the same number of fully connected hidden layers, and appropriately shaped input and output dimensions using Python and JAX \citep{BradburyFrostigEtAl2018}. For each method, the appropriate subset of these functions is trained via the Adam optimizer \citep{KingmaBa2015} on the corresponding loss function. Training is performed for $50000$ epochs while the loss on a held-out validation set is monitored; for each method, the model parameters corresponding to the lowest validation loss are chosen for testing. This training procedure is repeated for each method across 5~random seeds.

\begin{figure*}[t]
\centering
\includegraphics[width=\textwidth]{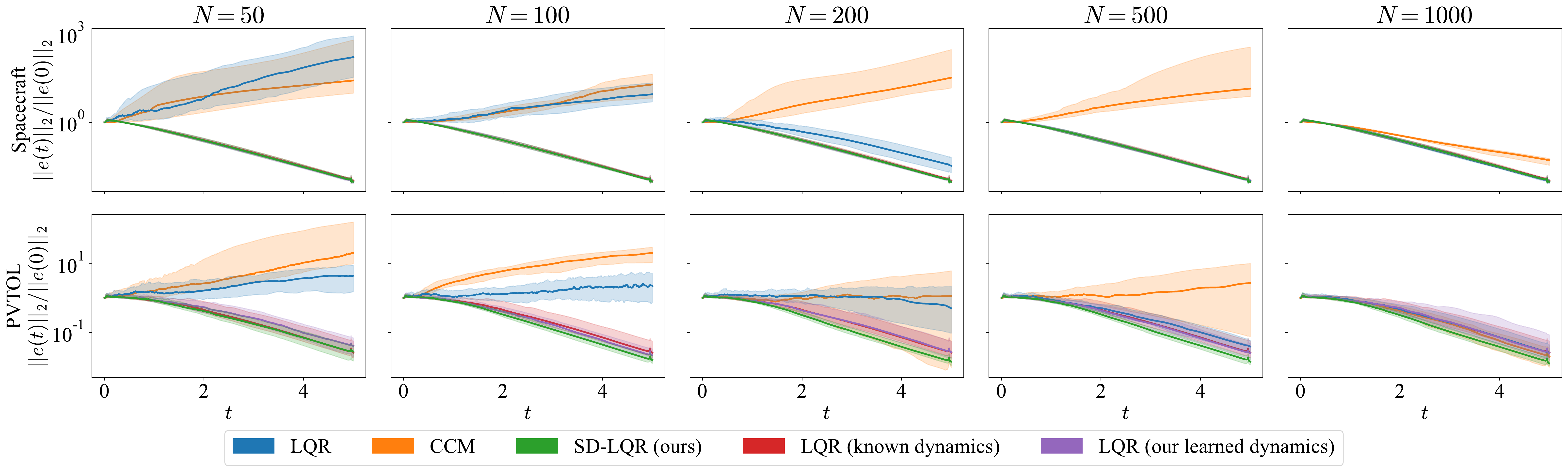}\vspace{-1em}
\caption{%
    Trajectory tracking results for both the spacecraft and PVTOL systems for $N_\text{test} = 100$ trajectories each. The top and bottom rows show the normalized tracking error over time for the spacecraft and PVTOL, respectively. Plots proceed from left to right with an increasing amount $N$ of labelled training data. Colored lines represent the median across all trajectories at each time $t$, while shaded regions depict interquartile ranges. Our learned SD-LQR method consistently outperforms the considered baseline learning methods.
}\label{fig:trends}
\end{figure*}

\paragraph{Testing and Results} To test the controllers learned with each method, we must first generate dynamically feasible trajectories for tracking. We first evaluate the PVTOL system qualitatively; we leverage its differential flatness \citep{Ailon2010} to generate a feasible pair $(\bar{x}(t), \bar{u}(t))$ yielding the double loop-the-loop shape in \cref{fig:double-loop}. For a single random seed, we plot the closed-loop trajectory from using each learned controller to track the loop-the-loop. We repeat this test for various sizes $N$ of the labelled training data set $\mathcal{D}$, and plot the trajectories in $(p_x,p_y)$-space and the normalized tracking error~$\frac{\norm{e(t)}_2}{\norm{e(0)}_2}$ over time. Our learned SD-LQR method is the only learning-based method that succeeds for every size~$N$, while the learned LQR and CCM controllers outright fail for smaller data set sizes. This is initial evidence of the \emph{data efficiency} in learning SDC factorizations for the purpose of control.

For more thorough testing, we want to generate many trajectories in a manner applicable to both the spacecraft and PVTOL. To this end, we generate $N_\text{test} = 100$ feasible trajectories $\mathcal{T}_\text{test} \defn \{(\bar{x}^{(k)}(t), \bar{u}^{(k)}(t))\}_{k=1}^{N_\text{test}}$ for each system by solving the optimal control problem
\begin{equation}\label{eq:ocp}
\begin{aligned}
    &\minimize_{\bar{x}(\cdot),\bar{u}(\cdot)}~
    \int_0^T \rbr*{\tran{\bar{x}(t)}Q\bar{x}(t) + \tran{\bar{u}(t)}R\bar{u}(t)}\,dt \\
    &\subjectto~\begin{aligned}[t]
        &\dot{\bar{x}}(t) = f(\bar{x}(t)) + B(\bar{x}(t))\bar{u}(t) \\
        &\bar{x}(0) = \bar{x}^{(k)}_0 \\
        &\bar{x}(T) = 0 \\
        &u_\mathrm{lb} \preceq \bar{u}(t) \preceq u_\mathrm{ub}
    \end{aligned}
\end{aligned}\end{equation}
for different initial conditions $\bar{x}_0^{(k)}$ sampled uniformly from~$\mathcal{X}$, where $(Q,R)$ are positive-definite weight matrices and $(u_\mathrm{lb}, u_\mathrm{ub})$ are control input bounds. Specifically, we use CasADi \citep{AnderssonGillisEtAl2019} to transcribe this problem into a nonlinear multiple shooting optimization that is passed to and solved by the Ipopt solver \citep{WachterBiegler2006}. Then, for each system, test trajectory, and tracking controller, we uniformly sample an initial state $x^{(k)}_0 \neq \bar{x}^{(k)}_0$ from $\mathcal{X}$, and simulate the closed-loop system.

\cref{fig:trends} displays the normalized tracking error $\frac{\norm{e(t)}_2}{\norm{e(0)}_2}$ over time for both the spacecraft and the PVTOL, for various training set sizes $N$. For each method, system, and $N$, the median normalized tracking error across the test trajectory set $\mathcal{T}_\text{test}$ is plotted along with shaded regions denoting the interquartile range over time. Once again we observe the data efficiency of our learned SD-LQR method for both systems. Our method even \emph{outperforms the oracle} LQR controller at higher values of $N$ for the PVTOL, despite having to learn the dynamics. This is likely due to how even the oracle LQR controller is limited by its linear approximation of the error dynamics, while the SD-LQR controller uses a learned model of the full nonlinear error dynamics. The na\"ive learned LQR method unsurprisingly converges to performance similar to the oracle LQR controller as $N$ increases. Notably, the CCM-based controller has the most trouble overall, thereby highlighting its data inefficiency and brittleness.

As an ablation in \cref{fig:trends}, with our dynamics model $(f,B)$ learned alongside the SDC factorizations $\mathcal{A}$ via the loss \cref{eq:loss-sdc}, we also perform linearized LQR rather than SD-LQR. We see that even solely using our learned dynamics model greatly improves the performance of linearized LQR across all values of $N$. This indicates that learning SDC factorizations as a structural prior \emph{regularizes} the dynamics model for closed-loop control. Nevertheless, if they are available, using the SDC factorizations directly in SD-LQR is more practical than repeatedly differentiating the dynamics online for linearized LQR.

\begin{figure*}[t]
\centering
\includegraphics[width=\textwidth]{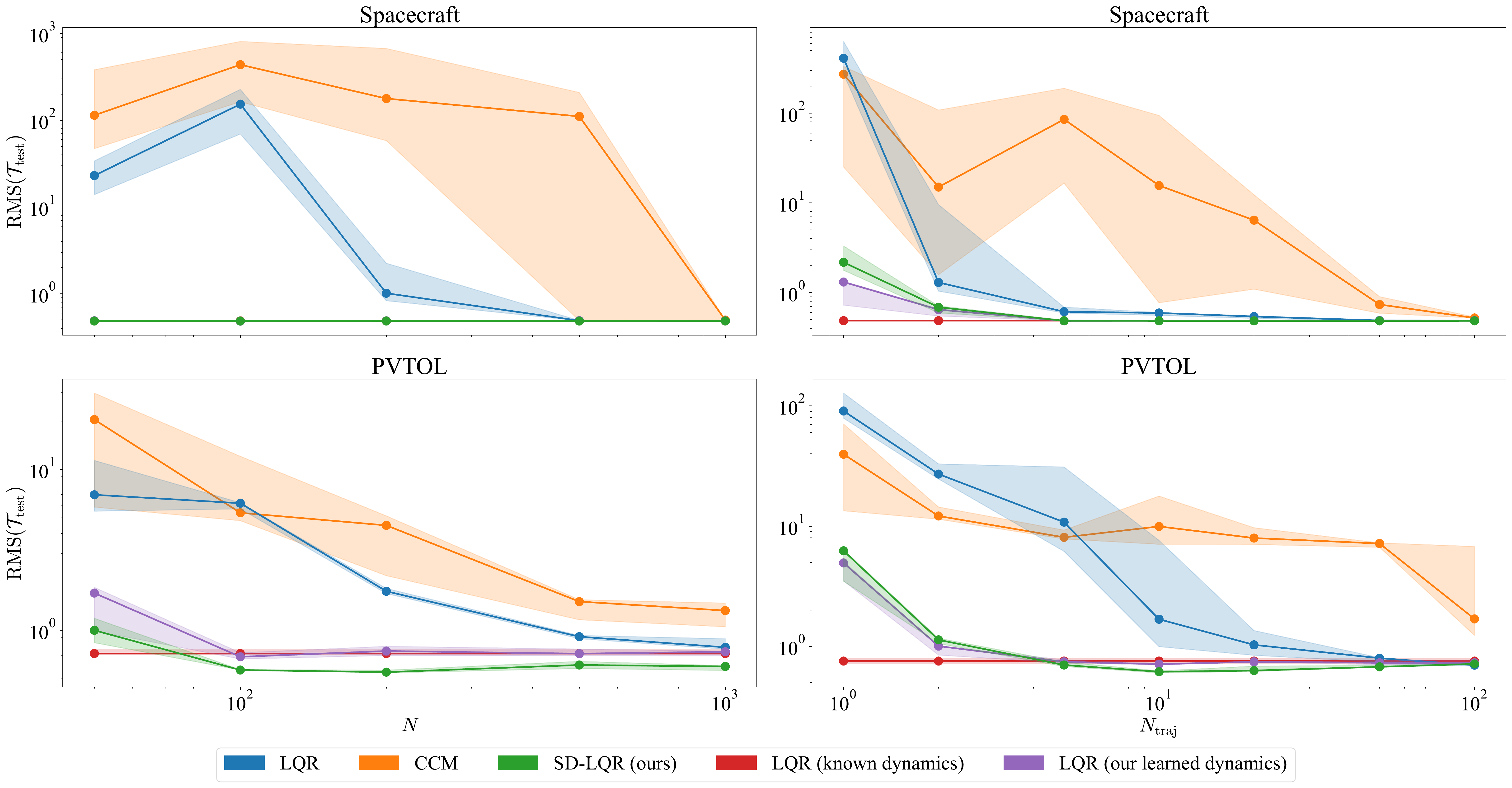}\vspace{-1em}
\caption{%
    RMS tracking error as a function of the labelled training data set size $N$ (left) and $N_\text{traj}$ (right), averaged across $N_\text{test} = 100$ test trajectories (see \cref{eq:rms}). Colored lines denote medians across $5$~random seeds, while shaded regions depict interquartile ranges. When training is done on uniformly sampled data, our learned SD-LQR method outperforms all other methods, even the oracle LQR on the PVTOL system. When training is done on trajectory data, performance is still at its best when using our dynamics model $(f,B)$ learned alongside SDC factorizations~$\mathcal{A} \defn (A_0, \{A_j\}_{j=1}^m)$. For $N_\text{traj} \in \{1,2\}$, linearized LQR with our learned model seems to do better than SD-LQR, although the performance difference is small.
}
\label{fig:efficiency}
\end{figure*}
To complete our assessment, we repeat the training procedure and tests underlying \cref{fig:trends} with $5$~random seeds, and aggregate the results in \cref{fig:efficiency}. Moreover, we repeat this entire process with \emph{trajectory} data. Specifically, instead of uniformly sampling points $\{(x^{(i)}, u^{(i)})\}_{i=1}^N$,  we solve the optimal control problem \cref{eq:ocp} to generate $N_\text{traj}$ dynamically feasible trajectories, different from those in the test set and each beginning at a different initial condition. We then collect data ``on-policy'' by simulating linearized LQR with the true dynamics to track these trajectories and generate a dataset. Each trajectory is $5$~seconds long with samples collected at $100$~Hz, such that $N = 500 N_\text{traj}$. However, samples along a trajectory are clearly temporally correlated, so we use $N_\text{traj}$ rather than~$N$ as a measure of the size of the dataset when training on trajectory data. For both uniformly sampled data and trajectory data, we consider the average root mean squared (RMS) error
\begin{equation}\label{eq:rms}
    \mathrm{RMS}(\mathcal{T}_\text{test}) \defn \frac{1}{N_\text{test}}
    \sum_{k=1}^{N_\text{test}}\sqrt{
        \frac{1}{T}\int_0^T \frac{\norm{e^{(k)}(t)}_2^2}{\norm{e^{(k)}(0)}_2^2} \,dt
    }
\end{equation}
across all test trajectories for each random seed and training set size $N$ (or $N_\text{traj}$). In \cref{fig:efficiency}, we plot the median and interquartile range of $\mathrm{RMS}(\mathcal{T}_\text{test})$ across random seeds as a function of $N$ and $N_\text{traj}$. From these plots, we can see an even starker contrast between the performance of our learned SD-LQR method and the others. For the spacecraft, our method is close in performance to that of the oracle LQR, which is not surprising given that the spacecraft dynamics are only mildly nonlinear. For the highly nonlinear PVTOL, our method begins outperforming the oracle LQR at only $N = 100$ and $N_\text{traj} = 5$. Meanwhile, both the learned linearized LQR and CCM controllers struggle until more training data is used, which highlights their data inefficiency compared to our method. Moreover, in the ablation wherein we apply linearized LQR to our dynamics model $(f,B)$ learned alongside SDC factorizations~$\mathcal{A}$, much of the performance gained does not require using the SDC factorizations directly. Once again, this indicates the usefulness of SDC factorizations as a \emph{regularizing} structural prior when they are learned alongside the model $(f,B)$.

\section{Conclusions and Future Work}\label{sec:conclusion}

In this paper, we studied how to jointly learn a dynamics model and a stabilizing tracking controller from only a finite data set of input-output measurements of an unknown dynamical system. We highlighted the importance of not only learning the dynamics, but also control-oriented structure that enables performant controller design. For this purpose, we proposed learning SDC factorizations of the dynamics for the purpose of state-dependent LQR tracking controller. Inspired by the literature, we compared our method to na\"ively learning a model for linearized LQR, and to methods that couple learned controllers with learned certificate functions. Overall, we found that our method outperformed the baselines in terms of data efficiency and tracking capability. Moreover, we observed that including SDC factorizations in the learning problem regularizes the dynamics model to perform better during closed-loop LQR-based control.

\paragraph{Future Work} We view this paper in part as a critique of methods that try to enforce closed-loop stabilizability guarantees by penalizing sampled violations of certificate conditions like \cref{eq:closed-loop-contraction}. As we have demonstrated, such methods can be data inefficient and brittle in learning good controllers, although the performance guarantees they are meant to certify (e.g., exponential stability) are attractive. Unlike these methods, our method learns \emph{intrinsic} structure in the dynamics to enable control, rather than simultaneously learning a parametric controller. Thus, an interesting avenue for future research lies in building system models that are \emph{intrinsically stabilizable}. This could build off of existing work in parameterizing dynamics models in part by stability certificates such that they are stable by construction \citep{ManekKolter2019,RevayWangEtAl2021}, albeit for the controlled case.

\section*{Acknowledgements}
We thank Masha Itkina for her invaluable feedback. We also thank the reviewers for their thoughtful input. This research was supported in part by the National Aeronautics and Space Administration (NASA) University Leadership Initiative via grant \#80NSSC20M0163. Spencer M.~Richards was also supported in part by the Natural Sciences and Engineering Research Council of Canada (NSERC). This article solely reflects our own opinions and conclusions, and not those of any NASA or NSERC entity.

\bibliography{ASL_papers,main}

\newcommand{\noopsort}[1]{} \newcommand{\printfirst}[2]{#1}
  \newcommand{\singleletter}[1]{#1} \newcommand{\switchargs}[2]{#2#1}
\begin{thebibliography}{42}
\providecommand{\natexlab}[1]{#1}
\providecommand{\url}[1]{\texttt{#1}}
\expandafter\ifx\csname urlstyle\endcsname\relax
  \providecommand{\doi}[1]{doi: #1}\else
  \providecommand{\doi}{doi: \begingroup \urlstyle{rm}\Url}\fi

\bibitem[Abate et~al.(2021)Abate, Ahmed, Giacobbe, and
  Peruffo]{AbateAhmedEtAl2021}
Abate, A., Ahmed, D., Giacobbe, M., and Peruffo, A.
\newblock Formal synthesis of {Lyapunov} neural networks.
\newblock \emph{{IEEE Control Systems Letters}}, 5\penalty0 (3):\penalty0
  773--778, 2021.
\newblock \doi{10.1109/LCSYS.2020.3005328}.

\bibitem[Acarman(2009)]{Acarman2009}
Acarman, T.
\newblock Nonlinear optimal integrated vehicle control using individual braking
  torque and steering angle with on-line control allocation by using
  state-dependent {Riccati} equation technique.
\newblock \emph{{Vehicle System Dynamics}}, 47\penalty0 (2):\penalty0 157--177,
  2009.
\newblock \doi{10.1080/00423110801932670}.

\bibitem[Ailon(2010)]{Ailon2010}
Ailon, A.
\newblock Simple tracking controllers for autonomous {VTOL} aircraft with
  bounded inputs.
\newblock \emph{{IEEE Transactions on Automatic Control}}, 55\penalty0
  (3):\penalty0 737--743, 2010.

\bibitem[Andersson et~al.(2019)Andersson, Gillis, Horn, Rawlings, and
  Diehl]{AnderssonGillisEtAl2019}
Andersson, J. A.~E., Gillis, J., Horn, G., Rawlings, J.~B., and Diehl, M.
\newblock {CasADi}: A software framework for nonlinear optimization and optimal
  control.
\newblock \emph{{Mathematical Programming Computation}}, 11\penalty0
  (1):\penalty0 1--36, 2019.

\bibitem[Banks et~al.(2002)Banks, Beeler, Kepler, and
  Tran]{BanksBeelerEtAl2002}
Banks, H.~T., Beeler, S.~C., Kepler, G.~M., and Tran, H.~T.
\newblock Reduced order modeling and control of thin film growth in an {HPCVD}
  reactor.
\newblock \emph{{SIAM Journal on Applied Mathematics}}, 62\penalty0
  (4):\penalty0 1251--1280, 2002.

\bibitem[Boffi et~al.(2020)Boffi, Tu, Matni, Slotine, and
  Sindhwani]{BoffiTuEtAl2020}
Boffi, N.~M., Tu, S., Matni, N., Slotine, J.-J.~E., and Sindhwani, V.
\newblock Learning stability certificates from data.
\newblock In \emph{{Conf.\ on Robot Learning}}, 2020.

\bibitem[Bradbury et~al.(2018)Bradbury, Frostig, Hawkins, Johnson, Leary,
  Maclaurin, Necula, Paszke, Vander{P}las, Wanderman-{M}ilne, and
  Zhang]{BradburyFrostigEtAl2018}
Bradbury, J., Frostig, R., Hawkins, P., Johnson, M.~J., Leary, C., Maclaurin,
  D., Necula, G., Paszke, A., Vander{P}las, J., Wanderman-{M}ilne, S., and
  Zhang, Q.
\newblock {JAX}: {Composable} transformations of {Python}+{NumPy} programs,
  2018.
\newblock {Available at }\url{http://github.com/google/jax}.

\bibitem[\c{C}imen(2010)]{Cimen2010}
\c{C}imen, T.
\newblock Systematic and effective design of nonlinear feedback controllers via
  the state-dependent {Riccati} equation ({SDRE}) method.
\newblock \emph{{Annual Reviews in Control}}, 34\penalty0 (1):\penalty0 32--51,
  2010.
\newblock \doi{10.1016/j.arcontrol.2010.03.001}.

\bibitem[\c{C}imen(2012)]{Cimen2012}
\c{C}imen, T.
\newblock Survey of state-dependent {Riccati} equation in nonlinear optimal
  feedback control synthesis.
\newblock \emph{{AIAA Journal of Guidance, Control, and Dynamics}}, 35\penalty0
  (4):\penalty0 1025--1047, 2012.
\newblock \doi{10.2514/1.55821}.

\bibitem[Chang \& Gao(2021)Chang and Gao]{ChangGao2011}
Chang, Y.-C. and Gao, S.
\newblock Stabilizing neural control using self-learned almost {Lyapunov}
  critics.
\newblock In \emph{{Proc.\ IEEE Conf.\ on Robotics and Automation}}, 2021.

\bibitem[Chang et~al.(2019)Chang, Roohi, and Gao]{ChangRoohiEtAl2019}
Chang, Y.-C., Roohi, N., and Gao, S.
\newblock Neural {Lyapunov} control.
\newblock In \emph{{Conf.\ on Neural Information Processing Systems}}, 2019.

\bibitem[Cloutier(1997)]{Cloutier1997}
Cloutier, J.~R.
\newblock State-dependent {Riccati} equation techniques: An overview.
\newblock In \emph{{American Control Conference}}, 1997.

\bibitem[Cloutier \& Zipfel(1999)Cloutier and Zipfel]{CloutierZipfel1999}
Cloutier, J.~R. and Zipfel, P.~H.
\newblock Hypersonic guidance via the state-dependent {Riccati} equation
  control method.
\newblock In \emph{{IEEE Conf.\ on Control Applications}}, 1999.

\bibitem[Dawson et~al.(2021)Dawson, Qin, Gao, and Fan]{DawsonQinEtAl2021}
Dawson, C., Qin, Z., Gao, S., and Fan, C.
\newblock Safe nonlinear control using robust neural {Lyapunov}-barrier
  functions.
\newblock In \emph{{Conf.\ on Robot Learning}}, 2021.

\bibitem[Dawson et~al.(2022)Dawson, Gao, and Fan]{DawsonGaoEtAl2022}
Dawson, C., Gao, S., and Fan, C.
\newblock Safe control with learned certificates: A survey of neural
  {Lyapunov}, barrier, and contraction methods.
\newblock Available at \url{https://arxiv.org/abs/2202.11762}, 2022.

\bibitem[Djeumou et~al.(2022)Djeumou, Neary, Goubault, Putot, and
  Topcu]{DjeumouNearyEtAl2022}
Djeumou, F., Neary, C., Goubault, E., Putot, S., and Topcu, U.
\newblock Neural networks with physics-informed architectures and constraints
  for dynamical systems modeling.
\newblock In \emph{{Learning for Dynamics \& Control}}, 2022.

\bibitem[Gupta et~al.(2020)Gupta, Menda, Manchester, and
  Kochenderfer]{GuptaMendaEtAl2020}
Gupta, J.~K., Menda, K., Manchester, Z., and Kochenderfer, M.~J.
\newblock Structured mechanical models for robot learning and control.
\newblock In \emph{{Learning for Dynamics \& Control}}, 2020.

\bibitem[Hauser et~al.(1992)Hauser, Sastry, and Meyer]{HauserSastryEtAl1992}
Hauser, J., Sastry, S., and Meyer, G.
\newblock Nonlinear control design for slightly non-minimum phase systems:
  Application to {V/STOL} aircraft.
\newblock \emph{{Automatica}}, 28\penalty0 (4):\penalty0 665--679, 1992.

\bibitem[Kingma \& Ba(2015)Kingma and Ba]{KingmaBa2015}
Kingma, D.~P. and Ba, J.~L.
\newblock Adam: A method for stochastic optimization.
\newblock In \emph{{Int.\ Conf.\ on Learning Representations}}, 2015.

\bibitem[Lew et~al.(2022)Lew, Sharma, Harrison, Bylard, and
  Pavone]{LewEtAl2022}
Lew, T., Sharma, A., Harrison, J., Bylard, A., and Pavone, M.
\newblock Safe active dynamics learning and control: A sequential
  exploration-exploitation framework.
\newblock \emph{{IEEE Transactions on Robotics}}, 38\penalty0 (5):\penalty0
  2888--2907, 2022.

\bibitem[Li et~al.(2022)Li, Duong, and Atanasov]{LiDuongEtAl2022}
Li, Z., Duong, T., and Atanasov, N.
\newblock Safe autonomous navigation for systems with learned {SE(3)}
  {Hamiltonian} dynamics.
\newblock In \emph{{Learning for Dynamics \& Control}}, 2022.

\bibitem[Lohmiller \& Slotine(1998)Lohmiller and Slotine]{LohmillerSlotine1998}
Lohmiller, W. and Slotine, J.-J.~E.
\newblock On contraction analysis for non-linear systems.
\newblock \emph{{Automatica}}, 34\penalty0 (6):\penalty0 683--696, 1998.

\bibitem[Manchester \& Slotine(2017)Manchester and
  Slotine]{ManchesterSlotine2017}
Manchester, I.~R. and Slotine, J.-J.~E.
\newblock Control contraction metrics: Convex and intrinsic criteria for
  nonlinear feedback design.
\newblock \emph{{IEEE Transactions on Automatic Control}}, 62\penalty0
  (6):\penalty0 3046--3053, 2017.

\bibitem[Manek \& Kolter(2019)Manek and Kolter]{ManekKolter2019}
Manek, G. and Kolter, J.~Z.
\newblock Learning stable deep dynamics models.
\newblock In \emph{{Conf.\ on Neural Information Processing Systems}}, 2019.

\bibitem[Murray et~al.(1994)Murray, Li, and Sastry]{MurrayLiEtAl1994}
Murray, R.~M., Li, Z., and Sastry, S.~S.
\newblock \emph{A Mathematical Introduction to Robotic Manipulation}.
\newblock {CRC Press}, 1 edition, 1994.

\bibitem[Revay et~al.(2021)Revay, Wang, and Manchester]{RevayWangEtAl2021}
Revay, M., Wang, R., and Manchester, I.~R.
\newblock Recurrent equilibrium networks: Unconstrained learning of stable and
  robust dynamical models.
\newblock In \emph{{Proc.\ IEEE Conf.\ on Decision and Control}}, 2021.
\newblock \doi{10.1109/CDC45484.2021.9683054}.

\bibitem[Richards et~al.(2018)Richards, Berkenkamp, and
  Krause]{RichardsBerkenkampEtAl2018}
Richards, S.~M., Berkenkamp, F., and Krause, A.
\newblock The {Lyapunov} neural network: Adaptive stability certification for
  safe learning of dynamical systems.
\newblock In \emph{{Conf.\ on Robot Learning}}, 2018.

\bibitem[Richards et~al.(2021)Richards, Azizan, Slotine, and
  Pavone]{RichardsAzizanEtAl2021}
Richards, S.~M., Azizan, N., Slotine, J.-J., and Pavone, M.
\newblock Adaptive-control-oriented meta-learning for nonlinear systems.
\newblock In \emph{{Robotics: Science and Systems}}, 2021.

\bibitem[Richards et~al.(2023)Richards, Azizan, Slotine, and
  Pavone]{RichardsAzizanEtAl2023}
Richards, S.~M., Azizan, N., Slotine, J.-J., and Pavone, M.
\newblock Control-oriented meta-learning.
\newblock \emph{{Int.\ Journal of Robotics Research}}, 2023.
\newblock In press.

\bibitem[Sindhwani et~al.(2018)Sindhwani, Tu, and
  Khansari]{SindhwaniTuEtAl2018}
Sindhwani, V., Tu, S., and Khansari, M.
\newblock Learning contracting vector fields for stable imitation learning.
\newblock {Available at }\url{https://arxiv.org/abs/1804.04878}, 2018.

\bibitem[Singh et~al.(2019)Singh, Landry, Majumdar, Slotine, and
  Pavone]{SinghLandryEtAl2019}
Singh, S., Landry, B., Majumdar, A., Slotine, J.-J.~E., and Pavone, M.
\newblock Robust feedback motion planning via contraction theory.
\newblock \emph{{Int.\ Journal of Robotics Research}}, 2019.
\newblock Submitted.

\bibitem[Singh et~al.(2021)Singh, Richards, Sindhwani, Slotine, and
  Pavone]{SinghRichardsEtAl2020}
Singh, S., Richards, S.~M., Sindhwani, V., Slotine, J.-J.~E., and Pavone, M.
\newblock Learning stabilizable nonlinear dynamics with contraction-based
  regularization.
\newblock \emph{{Int.\ Journal of Robotics Research}}, 40\penalty0
  (10--11):\penalty0 1123--1150, 2021.

\bibitem[Slotine \& Li(1987)Slotine and Li]{SlotineLi1987}
Slotine, J.-J.~E. and Li, W.
\newblock On the adaptive control of robot manipulators.
\newblock \emph{{Int.\ Journal of Robotics Research}}, 6\penalty0 (3):\penalty0
  49--59, 1987.

\bibitem[Slotine \& Li(1991)Slotine and Li]{SlotineLi1991}
Slotine, J.-J.~E. and Li, W.
\newblock \emph{Applied Nonlinear Control}.
\newblock {Prentice Hall}, 1991.

\bibitem[Sun et~al.(2020)Sun, Jha, and Fan]{SunJhaEtAl2020}
Sun, D., Jha, S., and Fan, C.
\newblock Learning certified control using contraction metric.
\newblock In \emph{{Conf.\ on Robot Learning}}, 2020.

\bibitem[Tsukamoto et~al.(2021{\natexlab{a}})Tsukamoto, Chung, and
  Slotine]{TsukamotoChungEtAl2021}
Tsukamoto, H., Chung, S.-J., and Slotine, J.-J.~E.
\newblock Neural stochastic contraction metrics for learning-based control and
  estimation.
\newblock \emph{{IEEE Control Systems Letters}}, 5\penalty0 (5):\penalty0
  1825--1830, 2021{\natexlab{a}}.

\bibitem[Tsukamoto et~al.(2021{\natexlab{b}})Tsukamoto, Chung, and
  Slotine]{TsukamotoChungEtAl2021b}
Tsukamoto, H., Chung, S.-J., and Slotine, J.-J.~E.
\newblock Contraction theory for nonlinear stability analysis and
  learning-based control: A tutorial overview.
\newblock \emph{{Annual Reviews in Control}}, 52:\penalty0 135--169,
  2021{\natexlab{b}}.

\bibitem[W\"{a}chter \& Biegler(2006)W\"{a}chter and
  Biegler]{WachterBiegler2006}
W\"{a}chter, A. and Biegler, L.~T.
\newblock On the implementation of an interior-point filter line-search
  algorithm for large-scale nonlinear programming.
\newblock \emph{{Mathematical Programming}}, 106:\penalty0 25--57, 2006.

\bibitem[Watanabe et~al.(2008)Watanabe, Iwase, Hatakeyama, and
  Maruyama]{WatanabeIwaseEtAl2008}
Watanabe, K., Iwase, M., Hatakeyama, S., and Maruyama, T.
\newblock Control strategy for a snake-like robot based on constraint force and
  verification by experiment.
\newblock In \emph{{IEEE/RSJ Int.\ Conf.\ on Intelligent Robots \& Systems}},
  2008.

\bibitem[Zhang et~al.(2010)Zhang, Wu, and L{\'o}pez]{ZhangWuEtAl2010}
Zhang, L., Wu, S.-Y., and L{\'o}pez, M.~A.
\newblock A new exchange method for convex semi-infinite programming.
\newblock \emph{{SIAM Journal on Optimization}}, 20\penalty0 (6):\penalty0
  2959--2977, 2010.

\bibitem[Zhong et~al.(2020)Zhong, Dey, and Chakraborty]{ZhongDeyEtAl2020}
Zhong, Y.~D., Dey, B., and Chakraborty, A.
\newblock Symplectic {ODE-Net}: Learning {Hamiltonian} dynamics with control.
\newblock In \emph{{Int.\ Conf.\ on Learning Representations}}, 2020.

\bibitem[Zhou et~al.(2022)Zhou, Quartz, De~Sterck, and Liu]{ZhouQuartzEtAl2022}
Zhou, R., Quartz, T., De~Sterck, H., and Liu, J.
\newblock Neural {Lyapunov} control of unknown nonlinear systems with stability
  guarantees.
\newblock In \emph{{Conf.\ on Neural Information Processing Systems}}, 2022.

\end{thebibliography}
\bibliographystyle{icml2023}

\newpage
\appendix
\onecolumn
\section{Hyperparameters and Implementation Details}\label{appendix}

\paragraph{Physical Parameters} For the spacecraft, we set its mass to $m=0.5$, rotational moment of inertia to $J = 0.005$, and its center-of-mass offset to $(d_x,d_y) = (0.1,0.1)$. For the PVTOL, we set the its mass to $m=0.5$, arm length to $\ell = 0.25$, rotational moment of inertia to $J = 0.005$, and gravitational acceleration to $g = 9.81$.

\paragraph{Hyperparameters} Each function in $(f, B, M, K, A_0, \{A_j\}_{j=1}^m)$ is approximated as a feedforward neural network with two hidden layers and $128$ hidden $\mathrm{tanh}$ activation units per layer. We use the Adam optimizer \citep{KingmaBa2015} with a learning rate of $10^{-3}$ and otherwise default hyperparameters. Training is performed for $50000$ epochs while the loss on a held-out validation set of size $0.10 N$ is monitored, where $N$ is the size of the labelled training data set. For each method, the model parameters corresponding to the lowest validation loss are chosen for testing.

For the CCM-based learning method, since \cref{eq:closed-loop-contraction} is homogeneous in $M(x)$, we choose $\underbar{\lambda} = 0.1$ without loss of generality. Additionally, we fix the overshoot $\alpha = 10$ and the decay rate $\beta = 0.5$ in the auxiliary loss \cref{eq:loss-ccm}. For both the CCM and SDC learning methods, we use $N^\text{CCM}_\text{aux} = N^\text{SDC}_\text{aux} = 10000$ unlabelled samples.

\paragraph{Sampling} For sampling states and inputs, we draw uniformly from bounded sets $\mathcal{X} \subset \R^n$ and $\mathcal{U} \subset \R^m$, respectively. For the spacecraft, we use
\begin{equation}\begin{aligned}
    \mathcal{X}
    &= \{x \in \R^6 \mid -c \preceq x \preceq c,\ c \defn (1, 1, \pi, 0.2, 0.2, 0.25)\}
    \\
    \mathcal{U}
    &= \{u \in \R^3 \mid -c \preceq u \preceq c,\ c \defn (1, 1, 0.1)\}
\end{aligned}.\end{equation}
For the PVTOL, we use
\begin{equation}\begin{aligned}
    \mathcal{X}
    &= \{x \in \R^6 \mid -c \preceq x \preceq c,\ c \defn (10, 10, \pi/3, 2, 1, \pi/3)\}
    \\
    \mathcal{U}
    &= \{u \in \R^2 \mid (0.1mg, 0.1mg) \preceq u \preceq (2mg, 2mg)\}
\end{aligned},\end{equation}
where $m$ and $g$ are the vehicle mass and gravitational acceleration, respectively. We also use $\mathcal{U}$ to define the control bounds in the optimal control problem for generating test trajectories.

\paragraph{Testing} When generating test trajectories with the optimal control problem in \cref{sec:experiments}, we use $Q = I$ and $R = 0.01I$ in the cost function for both systems. For simulating the linearized and state-dependent LQR tracking controllers, we use $Q = I$ and $R = I$ in their corresponding Riccati equations.

\end{document}